\newtheorem{thm}{Theorem}
\newtheorem{lem}{Lemma}
\newtheorem{cor}{Corollary}
\newtheorem{prop}{Proposition}
\theoremstyle{definition}
\newtheorem{definition}{Definition}
\def \arxiv {1}
\title{Relative Information Loss -- An Introduction}
\author{\IEEEauthorblockN{Bernhard C. Geiger\IEEEauthorrefmark{1}, Gernot Kubin\IEEEauthorrefmark{1}
\IEEEauthorblockA{\IEEEauthorrefmark{1}Signal Processing and Speech Communication Laboratory, Graz University of Technology, Austria}
$\{$geiger,gernot.kubin$\}$@tugraz.at}}
\begin{document}
\newcounter{myTempCnt}

\ifthenelse{\arxiv=1}{% Signals...
\newcommand{\x}[1]{x[#1]}
\newcommand{\y}[1]{y[#1]}

% PDFs
\newcommand{\pdfy}{f_Y(y)}

% Entropies
% \newcommand{\ent}[1]{H\left(#1\right)}
\newcommand{\ent}[1]{H(#1)}
\newcommand{\diffent}[1]{h(#1)}
\newcommand{\derate}[1]{\bar{h}\left(\mathbf{#1}\right)}
\newcommand{\mutinf}[1]{I(#1)}
\newcommand{\ginf}[1]{I_G(#1)}
\newcommand{\kld}[2]{D(#1||#2)}
\newcommand{\kldrate}[2]{\bar{D}(\mathbf{#1}||\mathbf{#2})}
\newcommand{\binent}[1]{H_2(#1)}
\newcommand{\binentneg}[1]{H_2^{-1}\left(#1\right)}
\newcommand{\entrate}[1]{\bar{H}(\mathbf{#1})}
\newcommand{\mutrate}[1]{\mutinf{\mathbf{#1}}}
\newcommand{\redrate}[1]{\bar{R}(\mathbf{#1})}
\newcommand{\pinrate}[1]{\vec{I}(\mathbf{#1})}
\newcommand{\loss}[2][\empty]{\ifthenelse{\equal{#1}{\empty}}{L(#2)}{L_{#1}(#2)}}
\newcommand{\lossrate}[2][\empty]{\ifthenelse{\equal{#1}{\empty}}{L(\mathbf{#2})}{L_{\mathbf{#1}}(\mathbf{#2})}}
\newcommand{\gain}[1]{G(#1)}
\newcommand{\atten}[1]{A(#1)}
\newcommand{\relLoss}[2][\empty]{\ifthenelse{\equal{#1}{\empty}}{l(#2)}{l_{#1}(#2)}}
\newcommand{\relLossrate}[1]{l(\mathbf{#1})}
\newcommand{\relTrans}[1]{t(#1)}
\newcommand{\partEnt}[2]{H^{#1}(#2)}

% Domains and Sets
\newcommand{\dom}[1]{\mathcal{#1}}
\newcommand{\indset}[1]{\mathbb{I}\left({#1}\right)}

% Distributions...
\newcommand{\unif}[2]{\mathcal{U}\left(#1,#2\right)}
\newcommand{\chis}[1]{\chi^2\left(#1\right)}
\newcommand{\chir}[1]{\chi\left(#1\right)}
\newcommand{\normdist}[2]{\mathcal{N}\left(#1,#2\right)}
\newcommand{\Prob}[1]{\mathrm{Pr}(#1)}
\newcommand{\Mar}[1]{\mathrm{Mar}(#1)}
\newcommand{\Qfunc}[1]{Q\left(#1\right)}

% Functions...
\newcommand{\expec}[1]{\mathrm{E}\left\{#1\right\}}
\newcommand{\expecwrt}[2]{\mathrm{E}_{#1}\left\{#2\right\}}
\newcommand{\var}[1]{\mathrm{Var}\left\{#1\right\}}
\renewcommand{\det}{\mathrm{det}}
\newcommand{\cov}[1]{\mathrm{Cov}\left\{#1\right\}}
\newcommand{\sgn}[1]{\mathrm{sgn}\left(#1\right)}
\newcommand{\sinc}[1]{\mathrm{sinc}\left(#1\right)}
\newcommand{\e}[1]{\mathrm{e}^{#1}}
\newcommand{\multint}{\iint{\cdots}\int}
\newcommand{\modd}[3]{((#1))_{#2}^{#3}}
\newcommand{\quant}[1]{Q\left(#1\right)}
\newcommand{\card}[1]{\mathrm{card}(#1)}
\newcommand{\diam}[1]{\mathrm{diam}(#1)}

% Vectors and Matrices
\newcommand{\ivec}{\mathbf{i}}
\newcommand{\hvec}{\mathbf{h}}
\newcommand{\gvec}{\mathbf{g}}
\newcommand{\avec}{\mathbf{a}}
\newcommand{\kvec}{\mathbf{k}}
\newcommand{\fvec}{\mathbf{f}}
\newcommand{\vvec}{\mathbf{v}}
\newcommand{\xvec}{\mathbf{x}}
\newcommand{\Xvec}{\mathbf{X}}
\newcommand{\Xhvec}{\hat{\mathbf{X}}}
\newcommand{\xhvec}{\hat{\mathbf{x}}}
\newcommand{\xtvec}{\tilde{\mathbf{x}}}
\newcommand{\Yvec}{\mathbf{Y}}
\newcommand{\yvec}{\mathbf{y}}
\newcommand{\Zvec}{\mathbf{Z}}
\newcommand{\Nvec}{\mathbf{N}}
\newcommand{\Pvec}{\mathbf{P}}
\newcommand{\muvec}{\boldsymbol{\mu}}
\newcommand{\wvec}{\mathbf{w}}
\newcommand{\Wvec}{\mathbf{W}}
\newcommand{\Hmat}{\mathbf{H}}
\newcommand{\Amat}{\mathbf{A}}
\newcommand{\Fmat}{\mathbf{F}}

\newcommand{\zerovec}{\mathbf{0}}
\newcommand{\eye}{\mathbf{I}}
\newcommand{\evec}{\mathbf{i}}

\newcommand{\zeroone}{\left[\begin{array}{c}\zerovec^T\\ \eye\end{array} \right]}
\newcommand{\zerooneT}{\left[\begin{array}{cc}\zerovec & \eye\end{array} \right]}
\newcommand{\zerooneM}{\left[\begin{array}{cc}\zerovec &\zerovec^T\\\zerovec& \eye\end{array} \right]}

\newcommand{\Cxx}{\mathbf{C}_{XX}}
\newcommand{\Cx}{\mathbf{C}_{\Xvec}}
\newcommand{\Chx}{\hat{\mathbf{C}}_{\Xvec}}
\newcommand{\Cy}{\mathbf{C}_{\Yvec}}
\newcommand{\Cz}{\mathbf{C}_{\Zvec}}
\newcommand{\Cn}{\mathbf{C}_{\mathbf{N}}}
\newcommand{\Cnt}{\mathbf{C}_{\tilde{\mathbf{N}}}}
\newcommand{\Cxh}{\mathbf{C}_{\hat{X}\hat{X}}}
\newcommand{\rxx}{\mathbf{r}_{XX}}
\newcommand{\Cxy}{\mathbf{C}_{XY}}
\newcommand{\Cyy}{\mathbf{C}_{YY}}
\newcommand{\Cnn}{\mathbf{C}_{NN}}
\newcommand{\Cyx}{\mathbf{C}_{YX}}
\newcommand{\Cygx}{\mathbf{C}_{Y|X}}
\newcommand{\Wmat}{\underline{\mathbf{W}}}

\newcommand{\Jac}[2]{\mathcal{J}_{#1}(#2)}

% Other stuff
\newcommand{\NN}{{N{\times}N}}
\newcommand{\perr}{P_e}
\newcommand{\perh}{\hat{\perr}}
\newcommand{\pert}{\tilde{\perr}}

% Index
% \newcommand{\vecind}[1]{\mathbf{#1}}
\newcommand{\vecind}[1]{#1_0^n}
\newcommand{\roots}[2]{{#1}_{#2}^{(i_{#2})}}
\newcommand{\rootx}[1]{x_{#1}^{(i)}}
\newcommand{\rootn}[2]{x_{#1}^{#2,(i)}}

% Abbrevations
\newcommand{\markkern}[1]{f_M(#1)}
\newcommand{\pole}{a_1}
\newcommand{\preim}[1]{g^{-1}[#1]}
\newcommand{\preimV}[1]{\mathbf{g}^{-1}[#1]}
\newcommand{\Xmax}{\bar{X}}
\newcommand{\Xmin}{\underbar{X}}
\newcommand{\xmax}{x_{\max}}
\newcommand{\xmin}{x_{\min}}
\newcommand{\limn}{\lim_{n\to\infty}}
\newcommand{\limX}{\lim_{\hat{\Xvec}\to\Xvec}}
\newcommand{\limx}{\lim_{\hat{X}\to X}}
\newcommand{\limXo}{\lim_{\hat{X}_1\to X_1}}
\newcommand{\sumin}{\sum_{i=1}^n}
\newcommand{\finv}{f_\mathrm{inv}}%f_{X_n}^{-1}
\newcommand{\ejtheta}{\e{\jmath\theta}}
\newcommand{\khat}{\bar{k}}
\newcommand{\modeq}[1]{g(#1)}
\newcommand{\partit}[1]{\mathcal{P}_{#1}}
\newcommand{\psd}[1]{S_{#1}(\e{\jmath \theta})}
\newcommand{\borel}[1]{\mathfrak{B}(#1)}
\newcommand{\infodim}[1]{d(#1)}

% signal blocks
\newcommand{\delay}[2]{\psblock(#1){#2}{\footnotesize$z^{-1}$}}
\newcommand{\Quant}[2]{\psblock(#1){#2}{\footnotesize$\quant{\cdot}$}}
\newcommand{\moddev}[2]{\psblock(#1){#2}{\footnotesize$\modeq{\cdot}$}}}{}

\maketitle

\begin{abstract}
We introduce a relative variant of information loss to characterize the behavior of deterministic input-output systems. We show that the relative loss is closely related to R\'{e}nyi's information dimension. We provide an upper bound for continuous input random variables and an exact result for a class of functions (comprising quantizers) with infinite absolute information loss. A connection between relative information loss and reconstruction error is investigated.
\end{abstract}

\section{Introduction}\label{sec:intro}
System theory provides a vast literature of mathematical descriptions of deterministic input-output systems. The gain of a linear system at a specific input frequency is specified by its transfer function, and for the distortion introduced by nonlinear components certain single-letter measures (e.g., signal-to-distortion ratio) have been defined. These and the measures introduced for the design of systems (e.g., the mean-squared error) give ample choice to the engineer to characterize a system at hand. However, most of the available descriptions are energy-centered or consider second-order statistics only. A big exception are descriptions of chaotic, autonomous dynamical systems~\cite{Jost_DynSys}.

Recently, however, we observe a trend to employ information-theoretic descriptions and cost functions, especially in machine learning and nonlinear adaptive systems~\cite{Principe_ITLearning}. We believe that system theory would also benefit from single-letter information-theoretic characterizations of deterministic input-output systems, and thus have introduced \emph{information loss} as a possible candidate in~\cite{Geiger_ILStatic_IZS}. In this work we complement the notion of absolute information loss with its relative version, in order to provide a meaningful measure in cases where the absolute information loss is infinite.

Relative information loss for static functions, or \emph{fractional} information loss, has already been introduced by Watanabe~\cite{Watanabe_InfoLoss} in the context of stationary stochastic processes on finite alphabets. It is also worth mentioning that a rather similar quantity has been used in~\cite{Quinlan_GainRatio}, denoted as \emph{information gain ratio}:
\begin{equation}
 \frac{\mutinf{C;A}}{\ent{A}}
\end{equation}
There, $A$ is an attribute with a finite set of values, $C$ is a class variable, and the value of $A$ for which this measure achieves its maximum is assumed to be the most appropriate root of a decision tree used for classification. %The normalization by $\ent{A}$ was introduced to overcome the natural bias towards selecting attributes with a large set of possible values~\cite{Quinlan_GainRatio,Mitchell_MachineLearning}. 
In this work we will consider the quantity
\begin{equation}
 1-\frac{\mutinf{C;A}}{\ent{A}} = \frac{\ent{A|C}}{\ent{A}}
\end{equation}
and extend its definition to a larger class of random variables.

The paper is organized as follows: We define relative information loss in Section~\ref{sec:defLoss} and analyze its elementary properties in Section~\ref{sec:basicProp}. Section~\ref{sec:constFunct} is devoted to a class of deterministic systems for which the absolute loss was shown to be infinite. We present a bound for the probability of a reconstruction error in Section~\ref{sec:Fano} and conclude with a few examples in Section~\ref{sec:examples}.

% Assuming now that we take into account attributes which are not necessarily discrete-valued and employing the same partitioning of the sample space as in Definition~\ref{def:relloss}, maximizing the information gain ratio corresponds to minimizing the relative information loss of the classifier. 

\section{A Definition of Relative Information Loss}\label{sec:defLoss}
We start with recalling the definition given in~\cite{Geiger_ILStatic_IZS}, where the absolute information loss induced by transforming an $N$-dimensional random variable (RV) $\Xvec$ to another $N$-dimensional RV $\Yvec$ by a static function $\gvec{:}\ \dom{X}\to\dom{Y}$, $\dom{X},\dom{Y}\subseteq\mathbb{R}^N$ was given as
\begin{equation}
 \loss{\Xvec\to \Yvec} = \sup_{\partit{}} \left(\mutinf{\hat{\Xvec};\Xvec}-\mutinf{\hat{\Xvec};\Yvec}\right) = \ent{\Xvec|\Yvec}\label{eq:loss}
\end{equation}
where the supremum is over all partitions $\partit{}$ of $\dom{X}$, and where $\hat{\Xvec}$ is obtained by quantizing $\Xvec$ according to the partition $\partit{}$ (see Fig.~\ref{fig:sysmod}). 

It was shown in~\cite{Geiger_ILStatic_IZS}, that there exist functions which loose an infinite amount of information; in particular, if the probability measure $P_\Xvec$ is absolutely continuous w.r.t. the $N$-dimensional Lebesgue measure ($P_\Xvec\ll\mu^N$), quantizers, limiters, and mappings to subspaces of lower dimensionality suffer from infinite information loss. Since some of these functions also transfer an infinite amount of information (i.e., $\mutinf{\Xvec;\Yvec}=\infty$), information loss alone obviously does not suffice to fully characterize the function $\gvec$ in information-theoretic terms.

Thus, we complement this absolute quantity of information loss by a relative one, indicating the percentage of information lost in the function:

\begin{definition}\label{def:relloss}
 Let $\Xvec$ be an $N$-dimensional RV on the sample space $\dom{X}$, and let $\Yvec$ be obtained by transforming $\Xvec$ with a static function $\gvec$. We define the \emph{relative information loss} induced by this transform as
\begin{equation}
 \relLoss{\Xvec\to\Yvec} = \limn \frac{\ent{\hat{\Xvec}_n|\Yvec}}{\ent{\hat{\Xvec}_n}}
\end{equation}
where $\hat{\Xvec}_n=\frac{\lfloor n\Xvec \rfloor}{n}$ (elementwise). The quantity on the left is defined if the limit on the right-hand side exists.
\end{definition}
One can consider $\hat{\Xvec}_n$ as being obtained by a vector quantization of $\Xvec$ with quantization bins equal to $N$-dimensional hypercubes of side length $\frac{1}{n}$ (i.e., using a uniform partition $\partit{n}$). Note that the partition $\partit{2^{k+1}}$ is a refinement of $\partit{2^k}$ ($\partit{2^{k+1}}\prec\partit{2^k}$). %TODO: Check if this is the correct symbol!

\textbf{Remark}: First of all, the limit of a sequence of increasingly fine quantizations now takes the place of the supremum in~\eqref{eq:loss}. (In Definition~\ref{def:relloss}, the supremum would lead to $\relLoss{\Xvec\to\Yvec}=1$.) Alternatively, as it was shown in~\cite{Geiger_ILStatic_IZS}, the limit of this sequence can also be used in the Definition of absolute information loss, i.e.,
\begin{equation}
 \loss{\Xvec\to \Yvec} = \limn \ent{\hat{\Xvec}_n|\Yvec} = \ent{\Xvec|\Yvec}.
\end{equation}
Again, this only holds if the limit exists.

\begin{figure}[t]
 \centering  
% arXiv picture
\begin{pspicture}[showgrid=false](0,1)(8,3.5)
 	\pssignal(0,2){x}{$\hat{\Xvec}$}
	\pssignal(4,2){xt}{$\Xvec$}
 	\pssignal(2,1){n}{$\partit{}$}
	\psfblock[framesize=1 0.75](2,2){oplus}{$Q(\cdot)$}
	\psfblock[framesize=1.5 1](6,2){c}{$g(\cdot)$}
	\pssignal(8,2){y}{$\Yvec$}
  \ncline[style=Arrow]{n}{oplus}
	\ncline[style=Arrow]{oplus}{x}
 \nclist[style=Arrow]{ncline}[naput]{xt,c,y}
	\nclist[style=Arrow]{ncline}[naput]{xt,oplus,x}
%   \ncline[style=Arrow]{c}{oplus}
	\psline[style=Dash](1,2.75)(4,2.75)
	\psline[style=Dash](1,2.25)(1,2.75)
	\psline[style=Dash](4,2.25)(4,2.75)
	\psline[style=Dash](0.75,3.25)(7.25,3.25)
	\psline[style=Dash](0.75,2.25)(0.75,3.25)
	\psline[style=Dash](7.25,2.25)(7.25,3.25)
 	\rput*(2.5,2.75){\scriptsize{$\mutinf{\hat{\Xvec};\Xvec}$}}
	\rput*(4,3.25){\scriptsize{$\mutinf{\hat{\Xvec};\Yvec}$}}
\end{pspicture}
\caption{Model for computing the information loss of a memoryless input-output system $g$. $Q$ is a quantizer with partition $\partit{}$.}
\label{fig:sysmod}
\end{figure}
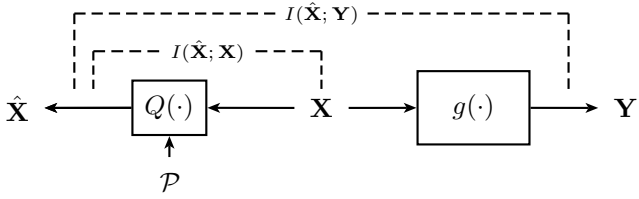

\section{Elementary Properties of Relative Information Loss}\label{sec:basicProp}
We will now highlight the basic properties of relative information loss: First of all, $\relLoss{\Xvec\to\Yvec}\in[0,1]$, which is due to the non-negativity of entropy and the fact that conditioning reduces entropy. It is interesting to note, however, that $\relLoss{\Xvec\to\Yvec}=0$ does not imply that the function $\gvec$ is \emph{information lossless}, i.e., that $\loss{\Xvec\to\Yvec}=0$. While this holds for discrete RVs $\Xvec$ (with finite entropy $\ent{\Xvec}$), for RVs with a continuous component this only means that the absolute information loss is finite. Conversely, $\relLoss{\Xvec\to\Yvec}=1$ does not imply that the information transfer $\mutinf{\Xvec;\Yvec}$ is zero. Again, while this holds for discrete RVs, for RVs with a continuous component $\relLoss{\Xvec\to\Yvec}=1$ implies a finite information transfer. However, we can state the following
\begin{prop}
 Let $\Xvec$ be such that $\ent{\Xvec}=\infty$ and let\\ $\relLoss{\Xvec\to\Yvec}>0$. Then, $\loss{\Xvec\to\Yvec}=\ent{\Xvec|\Yvec}=\infty$.
\end{prop}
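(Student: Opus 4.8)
The plan is to argue by contradiction, using the fact that the relative loss is the ratio $\ent{\hat{\Xvec}_n|\Yvec}/\ent{\hat{\Xvec}_n}$ whose denominator is forced to diverge by the hypothesis $\ent{\Xvec}=\infty$: if the numerator stayed bounded, i.e.\ if $\loss{\Xvec\to\Yvec}=\ent{\Xvec|\Yvec}$ were finite, the ratio would tend to $0$, contradicting $\relLoss{\Xvec\to\Yvec}>0$. To make this precise I would first pass to the dyadic subsequence $n=2^k$: since $\partit{2^{k+1}}\prec\partit{2^k}$, the quantizer output $\hat{\Xvec}_{2^k}$ is a deterministic function of $\hat{\Xvec}_{2^{k+1}}$, so $k\mapsto\ent{\hat{\Xvec}_{2^k}}$ and $k\mapsto\ent{\hat{\Xvec}_{2^k}|\Yvec}$ are non-decreasing; their limits therefore exist in $[0,\infty]$ and, wherever the limits in Definition~\ref{def:relloss} are assumed to exist, agree with them, so it suffices to work along this subsequence. (Note also that $\relLoss{\Xvec\to\Yvec}>0$ already presupposes $\ent{\hat{\Xvec}_{2^k}}<\infty$ for all large $k$, since otherwise monotonicity would make the ratio $0$ or ill-defined from some index on, ruling out a positive limit.)

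Next I would invoke the two identities already available. By~\eqref{eq:loss} and the remark following Definition~\ref{def:relloss}, $\loss{\Xvec\to\Yvec}=\ent{\Xvec|\Yvec}=\lim_k\ent{\hat{\Xvec}_{2^k}|\Yvec}$; applying the same limit identity with a deterministic $\Yvec$ (equivalently, using the standard fact that the entropy of a RV is the limit of the entropies of its refining quantizations) gives $\lim_k\ent{\hat{\Xvec}_{2^k}}=\ent{\Xvec}=\infty$, hence $\ent{\hat{\Xvec}_{2^k}}\to\infty$. Now suppose, for contradiction, that $\ent{\Xvec|\Yvec}<\infty$. Then $\ent{\hat{\Xvec}_{2^k}|\Yvec}$ converges to a finite number while $\ent{\hat{\Xvec}_{2^k}}\to\infty$, so
\begin{equation}
 \relLoss{\Xvec\to\Yvec}=\lim_k\frac{\ent{\hat{\Xvec}_{2^k}|\Yvec}}{\ent{\hat{\Xvec}_{2^k}}}=0,
\end{equation}
contradicting $\relLoss{\Xvec\to\Yvec}>0$; therefore $\loss{\Xvec\to\Yvec}=\ent{\Xvec|\Yvec}=\infty$. (A direct variant: choose $k_0$ with $\ent{\hat{\Xvec}_{2^k}|\Yvec}\geq\tfrac12\relLoss{\Xvec\to\Yvec}\,\ent{\hat{\Xvec}_{2^k}}$ for $k\geq k_0$ and let $k\to\infty$.)

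The step I expect to require the most care is the identity $\lim_k\ent{\hat{\Xvec}_{2^k}}=\ent{\Xvec}$ in the regime $\ent{\Xvec}=\infty$. For a purely discrete $\Xvec$ of infinite entropy it is monotone convergence of the entropy along the refining discretizations $\hat{\Xvec}_{2^k}$; for an $\Xvec$ having an absolutely continuous component it follows from the high-rate expansion $\ent{\hat{\Xvec}_{2^k}}=\diffent{\Xvec}+Nk\log 2+o(1)$ on that component (which diverges irrespective of the value of $\diffent{\Xvec}$), and the general mixed case combines the two. Once this is in place, what remains is merely an elementary statement about sequences of extended reals, so no further analytic obstacle arises.
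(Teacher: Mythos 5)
Your proof is correct and takes essentially the same route as the paper's: assume $\ent{\Xvec|\Yvec}<\infty$, observe that the numerator of the defining ratio then stays bounded while the denominator tends to $\ent{\Xvec}=\infty$, so the ratio tends to $0$, contradicting $\relLoss{\Xvec\to\Yvec}>0$. The paper gets the bounded numerator a bit more directly via the data-processing inequality $\ent{\hat{\Xvec}_n|\Yvec}\leq\ent{\Xvec|\Yvec}$ and a $\liminf$ (needing only a divergent subsequence of $\ent{\hat{\Xvec}_n}$), so your dyadic-subsequence and monotonicity scaffolding, while sound, is not strictly necessary.
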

\begin{IEEEproof}
We prove the proposition by contradiction. To this end, assume that $\ent{\Xvec|\Yvec}=L\leq\infty$. Then,
\begin{IEEEeqnarray}{RCL}
 \relLoss{\Xvec\to\Yvec} &=& \limn \frac{\ent{\hat{\Xvec}_n|\Yvec}}{\ent{\hat{\Xvec}_n}} =\limn\inf \frac{\ent{\hat{\Xvec}_n|\Yvec}}{\ent{\hat{\Xvec}_n}}\\
&\leq& \limn\inf\frac{\ent{\Xvec|\Yvec}}{\ent{\hat{\Xvec}_n}}\\ &=& \limn\inf\frac{L}{\ent{\hat{\Xvec}_n}}=0
% &=&\limn\frac{L}{\ent{\hat{\Xvec}_n}} = 0
\end{IEEEeqnarray}
where the inequality is due to data processing. The last equality follows from the fact that at least a subsequence of $\ent{\hat{\Xvec}_n}$ converges to $\ent{\Xvec}$ (cf.~\cite{Pinsker_InfoEngl,Kolmogorov_ContinuousRVs}).
\end{IEEEproof}

Another interesting property of the sequence 
\begin{equation}\label{eq:seq}
 \frac{\ent{\hat{\Xvec}_n|\Yvec}}{\ent{\hat{\Xvec}_n}}
\end{equation}
is that, while it might be converging (as we will show in some practically relevant cases below), it is neither generally increasing or decreasing. Consider, for example, a function $\gvec$ which is bijective if restricted to elements of the partition $\{\dom{X}_j\}$, but non-injective on its domain (cf.~\cite{Geiger_ILStatic_IZS}). Thus, for a partition $\partit{n_0}\prec\{\dom{X}_j\}$, and an input probability measure $P_\Xvec\ll\mu^N$ the sequence in~\eqref{eq:seq} is decreasing for all further refinements. Conversely, let $\gvec$ be a vector quantizer with partition $\{\dom{X}_j\}$ and let $\partit{n_0}=\{\dom{X}_j\}$. Here, while $\ent{\hat{\Xvec}_{n_0}|Y}=0$ the sequence in~\eqref{eq:seq} increases for all further refinements (cf.~Section~\ref{ssec:quantizers}).

Definition~\ref{def:relloss} has an interesting relationship to the $\epsilon$-entropy proposed by Kolmogorov in~\cite{Kolmogorov_ContinuousRVs,Kolmogorov_EpsilonEntropy}, but an even more tight connection can be made to the \emph{information dimension} proposed by R\'{e}nyi in~\cite{Renyi_InfoDim}. From there, we restate

\begin{lem}[Asymptotic behavior of $\ent{\hat{\Xvec}_n}$]\label{lem:epsilon}
Let $\Xvec$ be an RV with existing information dimension $\infodim{\Xvec}$ and let $\ent{\hat{\Xvec}_1}<\infty$. Then, for $n\to\infty$ the entropy of the RV $\hat{\Xvec}_n$ quantized as in Definition~\ref{def:relloss} behaves as
\begin{equation}
  \ent{\hat{\Xvec}_n} = \infodim{\Xvec}\log n + h + o(1)
\end{equation}
where $h$ is the $\infodim{\Xvec}$-dimensional entropy of $\Xvec$ (provided it exists).
\end{lem}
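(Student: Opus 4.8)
\emph{Proof plan.} The plan is to recognise this statement as, in essence, a rephrasing of R\'enyi's definitions in~\cite{Renyi_InfoDim}, and to supply the small amount of bookkeeping that makes the rephrasing rigorous for the quantizer used in Definition~\ref{def:relloss}. First I would recall that R\'enyi defines the information dimension of $\Xvec$ as $\infodim{\Xvec}=\limn \ent{\hat{\Xvec}_n}/\log n$ whenever this limit exists, and, for that same value of $\infodim{\Xvec}$, the $\infodim{\Xvec}$-dimensional entropy as $h=\limn(\ent{\hat{\Xvec}_n}-\infodim{\Xvec}\log n)$ whenever this second limit exists, where $\hat{\Xvec}_n=\lfloor n\Xvec\rfloor/n$ elementwise --- precisely the sequence of uniform quantizations of Definition~\ref{def:relloss}. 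If one prefers to start from a version of R\'enyi's definition phrased with a continuously vanishing resolution $\epsilon\downarrow 0$ instead of $\epsilon=1/n$, I would first invoke a monotonicity argument along refinements of the uniform partitions $\partit{n}$ to reduce to the integer resolutions $\epsilon=1/n$; the vector case follows the scalar one verbatim.

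Next I would dispose of the finiteness issue, which is the sole reason the hypothesis $\ent{\hat{\Xvec}_1}<\infty$ is imposed. Because $\lfloor X_i\rfloor=\lfloor\lfloor nX_i\rfloor/n\rfloor$, the variable $\hat{\Xvec}_n$ is a refinement of $\hat{\Xvec}_1$, and within any one unit hypercube the refined variable $\hat{\Xvec}_n$ takes at most $n^N$ values; hence $\ent{\hat{\Xvec}_n}\le\ent{\hat{\Xvec}_1}+\ent{\hat{\Xvec}_n\,|\,\hat{\Xvec}_1}\le\ent{\hat{\Xvec}_1}+N\log n<\infty$ for every $n$. Consequently $\ent{\hat{\Xvec}_n}-\infodim{\Xvec}\log n$ is a genuine sequence of real numbers, so that speaking of its limit is meaningful.

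With finiteness secured, the conclusion is immediate: the hypothesis that $\infodim{\Xvec}$ exists says exactly that $\ent{\hat{\Xvec}_n}/\log n\to\infodim{\Xvec}$, and the hypothesis that the $\infodim{\Xvec}$-dimensional entropy $h$ exists says exactly that $\ent{\hat{\Xvec}_n}-\infodim{\Xvec}\log n\to h$; rewriting the latter gives $\ent{\hat{\Xvec}_n}=\infodim{\Xvec}\log n+h+o(1)$, which is the claim. There is therefore no deep obstacle here: the only points needing genuine care are (i) matching R\'enyi's quantization convention to the one in Definition~\ref{def:relloss} (or reducing to it via the monotonicity argument mentioned above) and (ii) the finiteness estimate that turns $\ent{\hat{\Xvec}_n}-\infodim{\Xvec}\log n$ into a well-defined real sequence. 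Should one instead want a self-contained proof that the two limits exist, one would lean on the refinement/subadditivity sandwich $\ent{\hat{\Xvec}_n}\le\ent{\hat{\Xvec}_{mn}}\le\ent{\hat{\Xvec}_n}+N\log m$, but that is not required here since existence is part of the hypotheses.
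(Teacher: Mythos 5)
Your proposal is correct and matches the paper's approach: the paper's entire proof is a citation to R\'enyi's original work, and your argument simply unpacks what that citation contains --- the statement is the definition of information dimension and $\infodim{\Xvec}$-dimensional entropy read off directly, with the bound $\ent{\hat{\Xvec}_n}\le\ent{\hat{\Xvec}_1}+N\log n$ supplying the finiteness needed to make the second limit meaningful. No discrepancy to report.
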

\begin{proof}
 See~\cite{Renyi_InfoDim} (cf. also~\cite{Kolmogorov_ContinuousRVs,Kolmogorov_EpsilonEntropy}).
\end{proof}
For an absolutely continuous RV $\Xvec$ we obtain from this Lemma the following
\begin{cor}[Theorems~1 \& 4 in~\cite{Renyi_InfoDim}]\label{cor:epsilon}
 Let $\Xvec$ be an RV with $P_\Xvec\ll\mu^N$ and $\ent{\hat{\Xvec}_1}<\infty$. Then, for $n\to\infty$ the entropy behaves as
\begin{equation}
 \ent{\hat{\Xvec}_n} = N\log n + \diffent{\Xvec} + o(1)
\end{equation}
where $\diffent{\cdot}$ is the differential entropy of $\Xvec$ (provided it exists).
\end{cor}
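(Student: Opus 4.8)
This corollary is the absolutely continuous specialization of Lemma~\ref{lem:epsilon}, so it suffices to verify that $P_\Xvec\ll\mu^N$ forces $\infodim{\Xvec}=N$ and that the associated $\infodim{\Xvec}$-dimensional entropy $h$ coincides with the differential entropy $\diffent{\Xvec}$. Both are classical (Theorems~1 and~4 in~\cite{Renyi_InfoDim}); I would recover them by a direct computation of $\ent{\hat{\Xvec}_n}$ in terms of the density, which simultaneously produces the $N\log n$ growth and pins down the constant.

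Let $f=dP_\Xvec/d\mu^N$, and for $\kvec\in\mathbb{Z}^N$ write $C_{n,\kvec}=\prod_{i=1}^N[k_i/n,(k_i+1)/n)$ for the quantization cell of volume $n^{-N}$ on which $\hat{\Xvec}_n$ takes the value $\kvec/n$. Then the point mass of $\hat{\Xvec}_n$ at $\kvec/n$ is $p_{n,\kvec}=\int_{C_{n,\kvec}}f\,d\mu^N=n^{-N}\bar f_{n,\kvec}$, where $\bar f_{n,\kvec}=n^N\int_{C_{n,\kvec}}f\,d\mu^N$ is the average of $f$ over that cell. Substituting into $\ent{\hat{\Xvec}_n}=-\sum_\kvec p_{n,\kvec}\log p_{n,\kvec}$ and using $\sum_\kvec p_{n,\kvec}=1$ gives
\begin{equation}
\ent{\hat{\Xvec}_n}=N\log n-\sum_\kvec n^{-N}\bar f_{n,\kvec}\log\bar f_{n,\kvec}=N\log n-\int\phi_n\,d\mu^N ,
\end{equation}
where $\phi_n$ is the simple function equal to $\bar f_{n,\kvec}\log\bar f_{n,\kvec}$ on $C_{n,\kvec}$. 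Granting (as argued below) that $\int\phi_n\,d\mu^N$ has a finite limit, this display already yields $\infodim{\Xvec}=\limn\ent{\hat{\Xvec}_n}/\log n=N$, and the whole claim reduces to the limit of the correction term.

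It remains to show $\int\phi_n\,d\mu^N\to\int f\log f\,d\mu^N=-\diffent{\Xvec}$, which is the heart of the matter. Pointwise, $\bar f_{n,\kvec}\to f(x)$ for $\mu^N$-a.e.\ $x$ by the Lebesgue differentiation theorem (along dyadic scales $n=2^k$ this is equivalently the martingale convergence theorem for $f$ against the refining partitions $\partit{2^k}$), hence $\phi_n\to f\log f$ a.e. The obstacle is promoting this to convergence of integrals. One direction is easy: conditional Jensen applied to the convex map $t\mapsto t\log t$ gives $\bar f_{n,\kvec}\log\bar f_{n,\kvec}\le n^N\int_{C_{n,\kvec}}f\log f\,d\mu^N$, whence $\int\phi_n\,d\mu^N\le\int f\log f\,d\mu^N$ for every $n$, so $\limsup_n\int\phi_n\,d\mu^N\le\int f\log f\,d\mu^N$; this also forces $(f\log f)^{+}$ to be integrable, which is why $\diffent{\Xvec}$ must be assumed to exist. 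For the reverse inequality one cannot argue cell by cell, because there can be infinitely many low-mass cells; instead the hypothesis $\ent{\hat{\Xvec}_1}<\infty$ bounds how the probability mass spreads over the coarsest partition, which provides an $n$-independent integrable minorant for $\phi_n$, and Fatou's lemma then gives $\liminf_n\int\phi_n\,d\mu^N\ge\int f\log f\,d\mu^N$. Combining the two bounds gives $\int\phi_n\,d\mu^N\to\int f\log f\,d\mu^N$, hence $\ent{\hat{\Xvec}_n}=N\log n+\diffent{\Xvec}+o(1)$; equivalently, this identifies $h=\diffent{\Xvec}$ in Lemma~\ref{lem:epsilon}.
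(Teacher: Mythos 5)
The paper offers no proof of this corollary at all: Lemma~\ref{lem:epsilon} is established by the single line ``See~\cite{Renyi_InfoDim}'' and the corollary is imported from Theorems~1 and~4 there, so any genuine derivation is already a departure from the paper. Your derivation follows the standard route. The exact identity $\ent{\hat{\Xvec}_n}=N\log n-\int\phi_n\,d\mu^N$ is correct, the Jensen step giving $\int\phi_n\,d\mu^N\le\int f\log f\,d\mu^N$ is correct, and the almost-everywhere convergence $\phi_n\to f\log f$ does follow from Lebesgue differentiation, since the (non-centered) cubes of side $1/n$ containing $x$ shrink nicely; this correctly covers all $n$, not just the dyadic subsequence. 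Reading off $\infodim{\Xvec}=N$ from the display is then immediate.

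The step you yourself call the heart of the matter is, however, the one you do not actually carry out, and the mechanism you name for it is doubtful. You assert that $\ent{\hat{\Xvec}_1}<\infty$ ``provides an $n$-independent integrable minorant for $\phi_n$.'' No such function is exhibited, and it is not clear one exists: pointwise one only has $(\phi_n)^-\le 1/e$, which is not integrable when $\dom{X}$ is unbounded, and the natural candidate $\sup_n(\phi_n)^-$ is controlled only through a maximal function of $f$, whose integrability over $\mathbb{R}^N$ does not follow from the stated hypotheses. What $\ent{\hat{\Xvec}_1}<\infty$ actually delivers is \emph{uniform integrability} (in particular tightness) of the negative parts: group the cells of $\partit{n}$ lying in a unit cube $C_{1,\mathbf{j}}$ of mass $q_{\mathbf{j}}$ and apply Jensen to the concave map $u\mapsto u\log(1/u)$ over the at most $n^N$ cells to get
\begin{equation}
\int_{C_{1,\mathbf{j}}}(\phi_n)^-\,d\mu^N\;\le\;\max\left\{q_{\mathbf{j}}\log\frac{1}{q_{\mathbf{j}}},\,\frac{1}{e}\right\}
\end{equation}
uniformly in $n$; summing over unit cubes outside a large finite set, the right-hand side is a tail of the convergent series defining $\ent{\hat{\Xvec}_1}$, hence small, while on the remaining bounded region $(\phi_n)^-\le 1/e$ dominates. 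The Fatou variant for sequences with uniformly integrable negative parts then yields $\liminf_n\int\phi_n\,d\mu^N\ge\int f\log f\,d\mu^N$ and closes the proof. So the skeleton of your argument is sound and correctly locates where each hypothesis enters, but as written the decisive lower bound is an assertion rather than a proof, and the asserted dominating function should be replaced by this uniform-integrability argument.
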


In other words, as a first approximation, the entropy of a continuous RV depends on the dimension of its probability measure, and only as a second approximation on the shape of its density. Note that the second and the third term in Lemma~\ref{lem:epsilon} can be neglected for large $n$.

Using these results we now maintain
\begin{thm}\label{thm:RILDim}
 Let $\Xvec$ be an RV with positive information dimension. Then, if $\infodim{\Xvec|\Yvec=\yvec}$ exists for all $\yvec\in\dom{Y}$, the relative information loss equals
%TODO: Verify: \footnote{\textcolor{red}{Does it suffice if this exists only $P_\Yvec$-a.s.?}}
\begin{equation}
 \relLoss{\Xvec\to\Yvec} = \frac{\expecwrt{\Yvec}{\infodim{\Xvec|\Yvec=\yvec}}}{\infodim{\Xvec}}
\end{equation}
where $\expecwrt{\Yvec}{\cdot}$ denotes the expectation w.r.t. $\Yvec$.
\end{thm}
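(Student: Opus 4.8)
The plan is to evaluate the limit in Definition~\ref{def:relloss} by substituting, in both numerator and denominator, the leading-order $\log n$ asymptotics supplied by Lemma~\ref{lem:epsilon}, and only then letting $n\to\infty$. For the denominator this is immediate: since $\Xvec$ has a positive information dimension, Lemma~\ref{lem:epsilon} gives $\ent{\hat{\Xvec}_n} = \infodim{\Xvec}\log n + h + o(1)$, hence $\ent{\hat{\Xvec}_n}/\log n \to \infodim{\Xvec} > 0$. The real work is to show that $\ent{\hat{\Xvec}_n|\Yvec}/\log n \to \expecwrt{\Yvec}{\infodim{\Xvec|\Yvec=\yvec}}$; dividing the two limits and using $\infodim{\Xvec}>0$ then yields the stated identity and, as a by-product, the existence of the limit defining $\relLoss{\Xvec\to\Yvec}$.

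For the numerator I would write $\ent{\hat{\Xvec}_n|\Yvec} = \expecwrt{\Yvec}{\ent{\hat{\Xvec}_n|\Yvec=\yvec}}$ and observe that, conditioned on $\Yvec=\yvec$, the RV $\hat{\Xvec}_n$ is precisely the uniform $1/n$-hypercube quantization of the conditional RV $\Xvec|\Yvec=\yvec$. Since $\infodim{\Xvec|\Yvec=\yvec}$ exists for every $\yvec\in\dom{Y}$ by hypothesis, Lemma~\ref{lem:epsilon} applies pointwise in $\yvec$ and gives $\ent{\hat{\Xvec}_n|\Yvec=\yvec}/\log n \to \infodim{\Xvec|\Yvec=\yvec}$ for each $\yvec$ (a minor technical point, which I would record as a standing assumption or deduce from $\ent{\hat{\Xvec}_1}<\infty$, is that $\ent{\hat{\Xvec}_1|\Yvec=\yvec}$ is finite so that the Lemma is applicable in the conditional setting). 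It then remains to pass the limit $n\to\infty$ through the expectation over $\Yvec$.

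The main obstacle is exactly this interchange of limit and $\Yvec$-expectation. I would justify it by dominated convergence: the sequence $\ent{\hat{\Xvec}_n|\Yvec=\yvec}/\log n$ is nonnegative and converges pointwise, so it suffices to exhibit a $P_\Yvec$-integrable dominating function independent of $n$ — concretely, a bound $\ent{\hat{\Xvec}_n|\Yvec=\yvec}\le c(\yvec)\log n + d(\yvec)$ with $c,d$ integrable, which again follows from Lemma~\ref{lem:epsilon} applied conditionally once one knows the conditional $\infodim{\Xvec|\Yvec=\yvec}$-dimensional entropies are $P_\Yvec$-integrable. Granting this, $\ent{\hat{\Xvec}_n|\Yvec}/\log n \to \expecwrt{\Yvec}{\infodim{\Xvec|\Yvec=\yvec}}$, and therefore
\begin{equation}
 \relLoss{\Xvec\to\Yvec} = \limn \frac{\ent{\hat{\Xvec}_n|\Yvec}/\log n}{\ent{\hat{\Xvec}_n}/\log n} = \frac{\expecwrt{\Yvec}{\infodim{\Xvec|\Yvec=\yvec}}}{\infodim{\Xvec}},
\end{equation}
which is the assertion of the theorem.
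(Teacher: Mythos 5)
Your proposal is correct and follows essentially the same route as the paper's own proof: both express numerator and denominator through the $\log n$-normalized limits defining information dimension, interchange the limit with the expectation over $\Yvec$ by dominated convergence, and then combine the two (existing, finite, with $\infodim{\Xvec}>0$) limits into the limit of the quotient. The only differences are cosmetic — you argue forward from Definition~\ref{def:relloss} while the paper works backward from the claimed formula — and you are in fact more explicit than the paper about the dominating function that the dominated-convergence step requires.
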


\begin{IEEEproof}
 For the proof we use the definition of information dimension given in~\cite{Renyi_InfoDim},
\begin{equation}
 \infodim{\Xvec} = \limn \frac{\ent{\hat{\Xvec}_n}}{\log n}
\end{equation}
where by assumption the limit exists. We obtain
\begin{IEEEeqnarray}{RCL}
 \relLoss{\Xvec\to\Yvec} &=& \frac{\expecwrt{\Yvec}{\infodim{\Xvec|\Yvec=\yvec}}}{\infodim{\Xvec}}\\
&=&\frac{\int_\dom{Y} \limn \frac{\ent{\hat{\Xvec}_n|\Yvec=\yvec}}{\log n}dP_\Yvec(\yvec)}{\limn \frac{\ent{\hat{\Xvec}_n}}{\log n}}\\
&\stackrel{(a)}{=}&\frac{\limn \int_\dom{Y} \frac{\ent{\hat{\Xvec}_n|\Yvec=\yvec}}{\log n}dP_\Yvec(\yvec)}{\limn \frac{\ent{\hat{\Xvec}_n}}{\log n}}\\
&\stackrel{(b)}{=}&
\limn \frac{\int_\dom{Y}\ent{\hat{\Xvec}_n|\Yvec=\yvec}dP_\Yvec(\yvec)}{\ent{\hat{\Xvec}_n}}\\
&=&\limn \frac{\ent{\hat{\Xvec}_n|\Yvec}}{\ent{\hat{\Xvec}_n}}
\end{IEEEeqnarray}
where in $(a)$ we used Lebesgue's dominated convergence theorem (e.g.,~\cite{Rudin_Analysis3}) and where $(b)$ results from the fact that, by assumption, the limits in the numerator and denominator exist and are finite.
\end{IEEEproof}

This tight connection between relative information loss and the ratio of information dimensions leads to a series of interesting insights, as we will show in this and a companion paper~\cite{Geiger_RILPCA_arXiv}. In particular, it will prove useful if the probability measures are absolutely continuous w.r.t. Lebesgue measure, as information and geometric dimension coincide in this case (cf.~\cite{Renyi_InfoDim}).

We are now ready to establish an upper bound on the relative information loss in the following
\begin{thm}\label{thm:average}
 Let $\Xvec$ be an RV with a probability measure $P_\Xvec\ll\mu^N$ and with $\ent{\hat{\Xvec}_1}<\infty$. Then, if the quantities on the right exist,
\begin{equation}
 \relLoss{\Xvec\to\Yvec} \leq \frac{1}{N}\sum_{i=1}^N\relLoss{X^{(i)}\to\Yvec}\leq \frac{1}{N}\sum_{i=1}^N\relLoss{X^{(i)}\to Y^{(i)}}
\end{equation}
where $X^{(i)}$ and $Y^{(i)}$are the components of $\Xvec$ and $\Yvec$, respectively.
\end{thm}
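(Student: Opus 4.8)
The plan is to reduce both inequalities to elementary (conditional) entropy manipulations applied to the defining ratio $\ent{\hat{\Xvec}_n|\Yvec}/\ent{\hat{\Xvec}_n}$, using absolute continuity to identify the relevant information dimensions.

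First I would collect two preliminary observations. Since $P_\Xvec\ll\mu^N$, each marginal $P_{X^{(i)}}$ is absolutely continuous w.r.t.\ the one-dimensional Lebesgue measure, hence $\infodim{X^{(i)}}=1$ while $\infodim{\Xvec}=N$; moreover $\ent{\hat{X}^{(i)}_1}\le\ent{\hat{\Xvec}_1}<\infty$ because $\hat{X}^{(i)}_1$ is a function of $\hat{\Xvec}_1$. Consequently Lemma~\ref{lem:epsilon} (equivalently Corollary~\ref{cor:epsilon}, or merely the definition of information dimension used in the proof of Theorem~\ref{thm:RILDim}) yields
\begin{equation}\label{eq:ratio1N}
 \frac{\ent{\hat{X}^{(i)}_n}}{\ent{\hat{\Xvec}_n}}=\frac{\ent{\hat{X}^{(i)}_n}/\log n}{\ent{\hat{\Xvec}_n}/\log n}\to\frac{1}{N}
\end{equation}
as $n\to\infty$.

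For the first inequality I would use that the quantization in Definition~\ref{def:relloss} acts elementwise, so $\hat{\Xvec}_n=(\hat{X}^{(1)}_n,\dots,\hat{X}^{(N)}_n)$ and subadditivity of conditional entropy gives $\ent{\hat{\Xvec}_n|\Yvec}\le\sum_{i=1}^N\ent{\hat{X}^{(i)}_n|\Yvec}$. Dividing by $\ent{\hat{\Xvec}_n}$ and factoring $\ent{\hat{X}^{(i)}_n|\Yvec}/\ent{\hat{\Xvec}_n}=(\ent{\hat{X}^{(i)}_n|\Yvec}/\ent{\hat{X}^{(i)}_n})(\ent{\hat{X}^{(i)}_n}/\ent{\hat{\Xvec}_n})$, I would let $n\to\infty$: the first factor tends to $\relLoss{X^{(i)}\to\Yvec}$ (which exists by hypothesis) and the second to $1/N$ by~\eqref{eq:ratio1N}, so $\relLoss{\Xvec\to\Yvec}\le\frac1N\sum_{i=1}^N\relLoss{X^{(i)}\to\Yvec}$ (to be read as a $\limsup$ bound if the left-hand limit is not a priori assumed to exist). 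The same step can also be phrased via Theorem~\ref{thm:RILDim} as the pointwise subadditivity $\infodim{\Xvec|\Yvec=\yvec}\le\sum_i\infodim{X^{(i)}|\Yvec=\yvec}$ integrated against $P_\Yvec$.

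For the second inequality it suffices to prove it term by term, i.e.\ $\relLoss{X^{(i)}\to\Yvec}\le\relLoss{X^{(i)}\to Y^{(i)}}$. Since $Y^{(i)}$ is a deterministic function of $\Yvec$, we have $\ent{\hat{X}^{(i)}_n|\Yvec}=\ent{\hat{X}^{(i)}_n|\Yvec,Y^{(i)}}\le\ent{\hat{X}^{(i)}_n|Y^{(i)}}$, the inequality being ``conditioning reduces entropy.'' Both resulting ratios share the denominator $\ent{\hat{X}^{(i)}_n}$, so passing to the limit gives the termwise bound, and summing over $i$ and dividing by $N$ concludes. I do not expect a genuine obstacle here; the only places demanding care are the justification of~\eqref{eq:ratio1N} (which is exactly where $P_\Xvec\ll\mu^N$ and $\ent{\hat{\Xvec}_1}<\infty$ are needed) and the bookkeeping of which limits are assumed to exist versus which must be handled through $\limsup$/$\liminf$.
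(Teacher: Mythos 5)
Your proposal is correct and follows essentially the same route as the paper: chain rule/subadditivity of conditional entropy for the first inequality, ``conditioning reduces entropy'' with $Y^{(i)}$ a function of $\Yvec$ for the second, and R\'{e}nyi's asymptotics ($\ent{\hat{\Xvec}_n}\sim N\log n$, $\ent{\hat{X}^{(i)}_n}\sim\log n$) to produce the factor $1/N$. The only cosmetic difference is that you extract the $1/N$ by factoring the ratio $\ent{\hat{X}^{(i)}_n|\Yvec}/\ent{\hat{\Xvec}_n}$ directly, whereas the paper passes through the conditional-information-dimension representation of Theorem~\ref{thm:RILDim} and an exchange of limit and summation; both rest on the same Corollary~\ref{cor:epsilon}.
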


\begin{proof}
 With Definition~\ref{def:relloss} and the chain rule of entropy we get
\begin{IEEEeqnarray}{RCL}
 \relLoss{\Xvec\to\Yvec}&=&\limn\frac{\sum_{i=1}^N\ent{\hat{X}^{(i)}_n|\hat{X}^{(1)}_n,\dots,\hat{X}^{(i-1)}_n,\Yvec}}{\ent{\hat{\Xvec}_n}}\notag\\
&\leq& \limn\frac{\sum_{i=1}^N\ent{\hat{X}^{(i)}_n|\Yvec}}{\ent{\hat{\Xvec}_n}}\label{eq:multline:lossbound}\\
&\stackrel{(a)}{=}& \frac{1}{\infodim{\Xvec}}\sum_{i=1}^N \expecwrt{\Yvec}{\infodim{X^{(i)}|\Yvec=\yvec}}
\end{IEEEeqnarray}
where in $(a)$ we exchanged summation and limit for similar reasons as in the proof of Theorem~\ref{thm:RILDim}. Corollary~\ref{cor:epsilon} now tells us that due to the absolute continuity $\infodim{\Xvec}=N$ and $\infodim{X^{(i)}}=1$ for all $i$. We thus obtain
\begin{multline}
 \relLoss{\Xvec\to\Yvec}\leq\frac{1}{N} \sum_{i=1}^N\frac{\expecwrt{\Yvec}{\infodim{X^{(i)}|\Yvec=\yvec}}}{\infodim{X^{(i)}}}\\=\frac{1}{N}\sum_{i=1}^N\relLoss{X^{(i)}\to\Yvec}
\end{multline}
which proves the first inequality. The second inequality is obtained by bounding $\ent{\hat{X}^{(i)}_n|\Yvec}\leq\ent{\hat{X}^{(i)}_n|Y^{(i)}}$ in~\eqref{eq:multline:lossbound}.
\end{proof}
At this point it is worth noting that throughout Section~\ref{sec:basicProp} no assumptions about a functional dependence between $\Xvec$ and $\Yvec$ were made. Indeed, all statements made in this Section hold equally for stochastic relationships (including stochastic independence) between $\Xvec$ and $\Yvec$.

\section{Relative Information Loss for Functions which are Constant}\label{sec:constFunct}
We now apply the relative information loss of Definition~\ref{def:relloss} to a class of functions for which we showed in~\cite{Geiger_ILStatic_IZS} that the absolute information loss is infinite. In particular, we are talking about functions which are constant on subsets $A_i$ of the domain with positive probability measure.

To this end, let $P_\Xvec\ll\mu^N$ be concentrated on a compact set $\dom{X}\subseteq\mathbb{R}^N$. Let further $A_i\subseteq\dom{X}$ with $P_\Xvec(A_i)>0$. Without loss of generality, we assume that the subsets $A_i$ are disjoint. Now take $\Yvec=\gvec(\Xvec)$, where $\gvec{:}\ \dom{X}\to\dom{Y}$, $\dom{Y}\subseteq\mathbb{R}^N$, is surjective, measurable, and constant on $A_i$, i.e., $\gvec(A_i)=\yvec_i$. As a consequence, $P_\Yvec$ is atomic on $\{\yvec_i\}$ (thus, $\loss{\Xvec\to\Yvec}=\infty$; cf.~\cite[Corollary~2]{Geiger_ILStatic_IZS}). With $A=\bigcup_i A_i$ we further require that $\gvec$ is piecewise bijective\footnote{see~\cite{Geiger_ILStatic_IZS} for a possible definition} on $\dom{X}\setminus A$, from which follows that $P_\Yvec$ is absolutely continuous on $\dom{Y}\setminus\{\yvec_i\}$.

We can now state the following
\begin{prop}\label{prop:functConst}
 Let $\Xvec$ be an RV with probability measure $P_\Xvec\ll\mu^N$ concentrated on a compact set $\dom{X}\subseteq\mathbb{R}^N$. Let $\gvec$ be such that it is constant on sets $A_i$ of positive $P_\Xvec$-measure and piecewise bijective elsewhere. Then, the relative information loss is
\begin{equation}
 \relLoss{\Xvec\to\Yvec} = P_\Xvec(A)
\end{equation}
where $A=\bigcup_i A_i$.
\end{prop}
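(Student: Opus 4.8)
The plan is to split the quantized conditional entropy $\ent{\hat{\Xvec}_n|\Yvec}$ according to whether the realization of $\Yvec$ is one of the atoms $\yvec_i$ or lies in the absolutely continuous part $\dom{Y}\setminus\{\yvec_i\}$, and then identify each contribution via the information-dimension formula of Theorem~\ref{thm:RILDim}. Concretely, I would write
\begin{equation}
 \expecwrt{\Yvec}{\infodim{\Xvec|\Yvec=\yvec}} = \sum_i P_\Yvec(\{\yvec_i\})\infodim{\Xvec|\Yvec=\yvec_i} + \int_{\dom{Y}\setminus\{\yvec_i\}}\infodim{\Xvec|\Yvec=\yvec}\,dP_\Yvec(\yvec)
\end{equation}
and argue that the second integral vanishes while the first sum collapses to $P_\Xvec(A)\cdot N$, after which dividing by $\infodim{\Xvec}=N$ (Corollary~\ref{cor:epsilon}) gives the claim.

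For the atomic part: conditioned on $\Yvec=\yvec_i$, the event is exactly $\{\Xvec\in A_i\}$ (since $\gvec$ is constant on $A_i$ with value $\yvec_i$ and the $A_i$ are disjoint), so the conditional law of $\Xvec$ is $P_\Xvec$ restricted and renormalized to $A_i$. Because $P_\Xvec\ll\mu^N$, this restricted measure is still absolutely continuous with respect to $\mu^N$, hence by Corollary~\ref{cor:epsilon} its information dimension is $N$. Since $P_\Yvec(\{\yvec_i\}) = P_\Xvec(A_i)$, summing gives $\sum_i P_\Xvec(A_i)\cdot N = N\,P_\Xvec(A)$. For the continuous part: on $\dom{Y}\setminus\{\yvec_i\}$ the map $\gvec$ is piecewise bijective, so conditioned on such a $\yvec$ the variable $\Xvec$ takes only finitely many values (the preimages of $\yvec$ under the bijective pieces), i.e.\ $\Xvec|\Yvec=\yvec$ is a discrete RV with finite support; a discrete RV has information dimension zero, so that integral is zero. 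I would also need to check the hypotheses of Theorem~\ref{thm:RILDim} are met — namely that $\infodim{\Xvec|\Yvec=\yvec}$ exists for $P_\Yvec$-almost every $\yvec$ — which follows from the two cases just described (the atomic case invokes Corollary~\ref{cor:epsilon}, the continuous case is a finite discrete RV).

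The main obstacle I anticipate is making the conditioning rigorous on the absolutely continuous part: $P_\Yvec$ there is nonatomic, so "conditioned on $\Yvec=\yvec$" must be handled through a regular conditional probability / disintegration of $P_\Xvec$ along the fibers of $\gvec$, and one has to verify that for $P_\Yvec$-a.e.\ $\yvec$ this conditional law is indeed supported on the finite fiber $\gvec^{-1}[\{\yvec\}]\cap(\dom{X}\setminus A)$. This is where the piecewise-bijectivity assumption does the work, and it is essentially the same fiber-counting argument used in~\cite{Geiger_ILStatic_IZS}; I would cite that reference for the measure-theoretic details rather than reprove it. A secondary (minor) point is confirming $\infodim{\Xvec} = N$ exists and equals $N$, which is immediate from $P_\Xvec\ll\mu^N$, $\ent{\hat{\Xvec}_1}<\infty$, and Corollary~\ref{cor:epsilon}. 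Once these pieces are in place the result is just arithmetic:
\begin{equation}
 \relLoss{\Xvec\to\Yvec} = \frac{N\,P_\Xvec(A) + 0}{N} = P_\Xvec(A).
\end{equation}
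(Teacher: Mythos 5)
Your proposal is correct and follows essentially the same route as the paper: invoke Theorem~\ref{thm:RILDim}, split $\expecwrt{\Yvec}{\infodim{\Xvec|\Yvec=\yvec}}$ into the atomic part (conditional law absolutely continuous, dimension $N$ by Corollary~\ref{cor:epsilon}, using compactness for $\ent{\hat{\Xvec}_1|\Yvec=\yvec_i}<\infty$) and the absolutely continuous part (conditional law supported on a countable fiber, dimension $0$), then compute. The only nitpick is that the preimage of $\yvec_i$ is not exactly $A_i$ but $A_i$ together with at most countably many points coming from the piecewise-bijective region; these carry zero $P_\Xvec$-measure, so your conclusion is unaffected.
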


\begin{proof}
By assumption and Corollary~\ref{cor:epsilon} we have $\infodim{\Xvec}=N$. Due to the properties of the function $\gvec$, $P_\Yvec$ decomposes into a component $P_\Yvec^{ac}\ll\mu^N$ and an atomic component $P_\Yvec^d$ concentrated on the points $\yvec_i=\gvec(A_i)$. Thus, the preimage of points $\yvec_i$ with positive $P_\Yvec$-measure is the union of the set $A_i$ and a countable number of points $\{x_{i,j}\}$. Since the set $A_i$ has positive $P_\Xvec$-measure (otherwise $P_\Yvec(\yvec_i)=0$), the conditional probability measure $P_{\Xvec|\Yvec=\yvec_i}\ll\mu^N$. Due to the compactness of the support $\dom{X}$ the conditional entropy $\ent{\hat{\Xvec}_1|\Yvec=\yvec_i}<\infty$, thus the associated information dimension exists and equals $N$. For all other points $\yvec\in\dom{Y}\setminus\{\yvec_i\}$ the preimage is a countable union of points. The associate conditional probability measure is 0-dimensional.

We now prove this Proposition with the help of Theorem~\ref{thm:RILDim}:
\begin{IEEEeqnarray}{RCL}
 \relLoss{\Xvec\to\Yvec} &=& \frac{1}{N}\int_\dom{Y} \infodim{\Xvec|\Yvec=\yvec} dP_\Yvec(\yvec)\\
&=& \frac{1}{N}\int_{\dom{Y}\setminus\{\yvec_i\}} \infodim{\Xvec|\Yvec=\yvec} dP_\Yvec(\yvec)\notag\\
&&+\frac{1}{N}\sum_i \infodim{\Xvec|\Yvec=\yvec_i} P_\Yvec(\yvec_i)\\
&=&\sum_i P_\Yvec(\yvec_i)
\end{IEEEeqnarray}
Since the preimage of $\yvec_i$ under $\gvec$ consists of a set $A_i$ of positive $P_\Xvec$-measure and (zero-measure) points, we can write
\begin{equation}
 \relLoss{\Xvec\to\Yvec} = \sum_i P_\Yvec(\yvec_i)= \sum_i P_\Xvec(A_i)= P_\Xvec(A)
\end{equation}
where the last equality follows from the fact that $A_i$ are disjoint and the additivity of the measure $P_\Xvec$.
\end{proof}

The interesting implication of this result is that the shape of the PDF on $A$ has no influence on the relative loss, and neither has the number of different sets $A_i$ (with different output values $\yvec_i$) -- yet, all these do have an influence on the information transport $\mutinf{\Xvec;\Yvec}$. This is in conflict with intuition, which suggests that whatever influences information transfer should also influence information loss, and, thus, also relative information loss. Yet, both the properties in Section~\ref{sec:basicProp} and the fact that $\ent{\hat{\Xvec}_n}$, as a first approximation, depends more on the dimension and the quantization bin size than on the shape of the PDF~\cite{Kolmogorov_ContinuousRVs} confirm this theoretical result.

Furthermore, in this particular case it turns out that $\Yvec$ is a mixture of a continuous and a discrete RV with information dimension $1-P_\Xvec(A)$~\cite{Renyi_InfoDim,Wu_Renyi}. One is thus led to the conjecture that indeed under some circumstances one can show that
\begin{equation}
 \relLoss{\Xvec\to\Yvec} = 1-\frac{\infodim{\Yvec}}{\infodim{\Xvec}}.
\end{equation}
If this really holds and under which conditions it does is currently under investigation.

\section{Relative Information Loss and Reconstruction Error}
\label{sec:Fano}
We next want to find connections between the relative information loss and the probability of a reconstruction error given by
\begin{equation}
 \perr= \min_f \Prob {\Xvec\neq f(\Yvec)}
\end{equation}
where $f$ is a function that tries to estimate or reconstruct the original $\Xvec$ from its image $\Yvec$.
It is well known that Fano's inequality does not hold for countably infinite alphabets (e.g.~\cite{Ho_EntrError}). However, we employ Fano's inequality here to derive a relationship between relative information loss and the probability of a reconstruction error by starting from a finite alphabet and then taking the limit. We present

\begin{thm}\label{thm:FanoRelative}
 Let $\Xvec$ be a RV with a probability measure $P_\Xvec\ll\mu^N$ which is concentrated on a compact set $\dom{X}\subset\mathbb{R}^N$. Let $\perr$ denote the probability of a reconstruction error. Then, the error probability is bounded by the relative information loss from below, i.e.,
\begin{equation}
 \perr\geq\relLoss{\Xvec\to\Yvec}.
\end{equation}
\end{thm}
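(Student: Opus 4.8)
\emph{Proof idea.} The plan is to establish the bound first for the quantized input $\hat{\Xvec}_n$ of Definition~\ref{def:relloss} via Fano's inequality, and then to pass to the limit $n\to\infty$. First I would introduce $p_n := \min_f \Prob{\hat{\Xvec}_n \neq f(\Yvec)}$, the smallest error probability for reconstructing the \emph{quantized} input from $\Yvec$, and observe that $p_n \leq \perr$ for every $n$: given any reconstructor $f$ for $\Xvec$, the composed map $\yvec \mapsto \lfloor n f(\yvec)\rfloor/n$ (componentwise) reconstructs $\hat{\Xvec}_n$ at least as well, since $\Xvec = f(\Yvec)$ forces $\hat{\Xvec}_n = \lfloor n f(\Yvec)\rfloor/n$; taking the infimum over $f$ yields $p_n \leq \perr$. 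Moreover, since $\dom{X}$ is compact and hence bounded, $\hat{\Xvec}_n$ takes at most $M_n$ distinct values with $M_n \leq (cn)^N$ for a constant $c$ depending only on $\dom{X}$, so that $\log M_n \leq N\log n + C$ for a constant $C$.

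Next I would invoke Fano's inequality for the finite-alphabet RV $\hat{\Xvec}_n$, estimated from $\Yvec$ through its optimal reconstructor, to obtain
\[
 \ent{\hat{\Xvec}_n|\Yvec} \leq \binent{p_n} + p_n \log(M_n-1) \leq 1 + \perr\,(N\log n + C),
\]
where I used $\binent{\cdot}\leq 1$ and $p_n \leq \perr$. Dividing by $\ent{\hat{\Xvec}_n}$ and letting $n\to\infty$ then finishes the argument: by Corollary~\ref{cor:epsilon}, which applies because $P_\Xvec \ll \mu^N$ and $\ent{\hat{\Xvec}_1}<\infty$ (the latter by compactness), $\ent{\hat{\Xvec}_n} = N\log n + \diffent{\Xvec} + o(1)$ — in fact only $\ent{\hat{\Xvec}_n}/\log n \to N$, i.e. $\infodim{\Xvec}=N$ from Lemma~\ref{lem:epsilon}, is needed — so $\ent{\hat{\Xvec}_n}/(N\log n + C) \to 1$ and
\[
 \relLoss{\Xvec\to\Yvec} = \limn \frac{\ent{\hat{\Xvec}_n|\Yvec}}{\ent{\hat{\Xvec}_n}} \leq \limn \frac{1 + \perr\,(N\log n + C)}{N\log n + \diffent{\Xvec} + o(1)} = \perr.
\]

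If the minimum in the definition of $\perr$ is not attained, I would rerun the argument with an $\epsilon$-optimal $f$, replacing $\perr$ by $\perr+\epsilon$ throughout, and let $\epsilon\to 0$ at the end. The one genuinely delicate step is the reduction $p_n \leq \perr$: one must check that the composed reconstructor $\lfloor n f\rfloor/n$ is measurable (it is, the floor map being Borel) and handle the non-attained-minimum case as just described. Everything else is routine bookkeeping — the cardinality estimate uses only boundedness of $\dom{X}$, and the growth rate of $\ent{\hat{\Xvec}_n}$ is supplied directly by Corollary~\ref{cor:epsilon}.
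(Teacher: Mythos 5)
Your proposal is correct and follows essentially the same route as the paper: apply Fano's inequality to the quantized input $\hat{\Xvec}_n$, bound $\card{\hat{\dom{X}}_n}$ polynomially in $n$ via compactness, relate the quantized error probability to $\perr$ through the inclusion $\{\hat{\Xvec}_n \neq \lfloor nf(\Yvec)\rfloor/n\}\subseteq\{\Xvec\neq f(\Yvec)\}$, and divide by $\ent{\hat{\Xvec}_n}\sim N\log n$. Your treatment is in fact slightly more careful than the paper's on two points — you only need $p_n\leq\perr$ rather than the paper's claim that $P_{e,n}$ increases monotonically to $\perr$, and you handle the case where the minimum defining $\perr$ is not attained.
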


\begin{IEEEproof}
 For the proof we start with a quantized version of the input RV, $\hat{\Xvec}_n$. Since $\hat{\Xvec}_n$ is a discrete RV on a finite alphabet $\hat{\dom{X}}_n$, we can employ the standard Fano bound~\cite{Cover_Information2},
\begin{equation}
 \ent{\hat{\Xvec}_n|\Yvec}\leq \binent{P_{e,n}}+P_{e,n}\log\card{\hat{\dom{X}}_n}
\end{equation}
where
\begin{equation}
 P_{e,n} = \Prob {\hat{\Xvec}_n\neq f^*(\Yvec)}.
\end{equation}
Since Fano's inequality holds for arbitrary estimators $f^*$, we let $f^*$ be the composition of $f^{\circ}=\arg\min_f\Prob {\Xvec\neq f(\Yvec)}$ and the quantizer of Definition~\ref{def:relloss}. $P_{e,n}$ is the probability that $f^{\circ}(\Yvec)$ and $\Xvec$ do not lie in the same quantization bin. Since the bin volume reduces with $n$, $P_{e,n}$ increases monotonically to $\perr$. We thus obtain with $\binent{p}\leq 1$ for all $0\leq p\leq 1$
\begin{equation}
 \ent{\hat{\Xvec}_n|\Yvec}\leq 1+\perr\log\card{\hat{\dom{X}}_n}.
\end{equation}

We next define the \emph{diameter} $D$ of $\dom{X}$ as
\begin{equation}
 D=\sup_{x_1,x_2\in\dom{X}}||x_1-x_2||
\end{equation}
where $||\cdot||$ is the Euclidean distance and where $D<\infty$ due to the compactness of $\dom{X}$. As an immediate consequence, $\dom{X}$ can be covered by an $N$-dimensional hypercube with side length $D$. Quantizing $\Xvec$ with a vector quantizer corresponds to covering $\dom{X}$ by hypercubes of side length $\frac{1}{n}$. It thus follows that
\begin{equation}
 \card{\hat{\dom{X}}_n} \leq\left(\lceil nD\rceil\right)^N\leq\left( nD+1\right)^N
\end{equation}
and finally
\begin{equation}
 \ent{\hat{\Xvec}_n|\Yvec}\leq 1+\perr N\log\left( nD+1\right).
\end{equation}
With Corollary~\ref{cor:epsilon} we thus get
\begin{IEEEeqnarray}{RCL}
 \relLoss{\Xvec\to\Yvec} &=& \limn \frac{\ent{\hat{\Xvec}_n|\Yvec}}{\ent{\hat{\Xvec}_n}}\\
&\leq& \limn \frac{1+\perr N\log\left( nD+1\right)}{\ent{\hat{\Xvec}_n}}\\
&\stackrel{(a)}{=}& \limn\frac{1}{N\log n}+\frac{\perr N\log\left( nD+1\right)}{N\log n}\\
&=&\perr
\end{IEEEeqnarray}
where in $(a)$ we again used Theorem~\ref{thm:RILDim} and the fact that $\infodim{\Xvec}=N$. This completes the proof.
\end{IEEEproof}

\section{Examples}\label{sec:examples}
In this Section we will now illustrate the theoretical results at the hand of a few examples.

\subsection{Quantizers}\label{ssec:quantizers}
The first practical application of our results will be the analysis of a quantizer, which is typically used to represent a continuous RV by a discrete RV, designed according to some optimality criterion (mean-squared reconstruction error, maximum output entropy, etc.). Since the output of the quantizer is discrete in amplitude, it is clear that an infinite amount of information is lost. In addition to that, since the quantizer function is constant almost everywhere it turns out that the relative information loss is unity:
\begin{equation}
 \relLoss{\Xvec\to\Yvec}=1
\end{equation}
In other words, disrespective of the (finite) number of quantization bins and the design criterion, the quantizer always destroys 100\% of the available information. This holds equally for scalar and vector quantizers. Note, however, that despite this fact still a positive amount of information is transferred by the quantizer (cf.~Section~\ref{sec:basicProp}).

\subsection{Center Clipper}\label{ssec:cclipper}
\begin{figure}[t!]
  \begin{center}
 \begin{pspicture}[showgrid=false](-2,-2)(2,2)
	\psaxeslabels{->}(0,0)(-2,-2)(2,2){$x$}{$g(x)$}
  \psplot[style=Graph,linecolor=black,plotpoints=500]{-1.8}{-0.85}{x}
	\psplot[style=Graph,linecolor=black,plotpoints=500]{0.85}{1.8}{x}
	\psplot[style=Graph,linecolor=black,plotpoints=500]{-0.8}{.8}{0}
	\psdisk[fillcolor=black](0.8,0){0.07}\psdisk[fillcolor=black](-0.8,0){0.07}
	\pscircle(0.8,0.8){0.07}\pscircle(-0.8,-0.8){0.07}
	\psTick{90}(0.8,0) \rput(0.8,-0.3){$c$}
	\psTick{90}(-0.8,0) \rput(-0.8,-0.3){$-c$}
 \end{pspicture}
\end{center}
\caption{Center clipper of Example 2}
\label{fig:cclipper}
\end{figure}
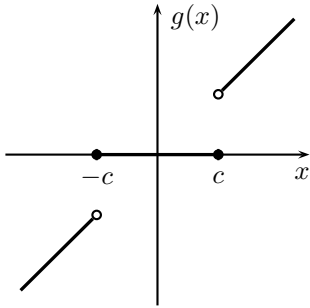
In signal processing center clippers (see Fig.~\ref{fig:cclipper}) are used for noise suppression or residual echo cancellation~\cite{Vary_DigSpeechTransm}. We let the center clipper be described by the following function:
\begin{equation}
 g(x) = \begin{cases}
 x,& \text{ if }|x|>c\\ 0, & \text{ else}
        \end{cases}
\end{equation}
By Theorem~\ref{prop:functConst} the relative information loss evaluates to $\relLoss{X\to Y}=P_X([-c,c])$, which reveals that it depends only on the clipping parameter $c$ and the probability mass contained in that interval. Yet, since center clippers \emph{do enhance} signal quality in many cases, this suggests that probably a different measure of information loss could be more appropriate.

Note further that the center clipper is bijective if it is restricted to $\dom{X}\setminus[-c,c]$. Thus, while outside of $[-c,c]$ we have a zero probability of a reconstruction error, within the center interval the error probability is unity. As a consequence, $\perr=P_X([-c,c])$ which makes the bound of Theorem~\ref{thm:FanoRelative} tight. If $g$ was not bijective outside of $[-c,c]$, but, e.g., would destroy the sign information, then $\perr>P_X([-c,c])$ and Theorem~\ref{thm:FanoRelative} still holds.

\section{Conclusion}
In this work, we introduced the notion of relative information loss, complementing its absolute variant presented by the authors in a previous work. We showed that there is a close connection between the relative loss and the R\'{e}nyi information dimension of the input and the conditional random variable of the input given the output.

For a continuous-valued input both upper bounds and an exact expression for a certain class of systems was presented. In particular, it was shown that quantizers loose 100\% of the available information. We finally analyzed a connection between the probability of reconstruction error and relative information loss.

\section*{Acknoledgment}
The authors thank Yihong Wu, Wharton School, University of Pennsylvania, for bringing R\'{e}nyi's information dimension to our attention.

\ifthenelse{\arxiv=1}{
\appendix
We now show the following
\begin{lem}\label{lem:quant}
\begin{equation}
 \limn \ent{\hat{\Xvec}_n|\Yvec} = \ent{\Xvec|\Yvec}
\end{equation}
provided the limit exists.
\end{lem}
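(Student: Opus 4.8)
The plan is to prove the identity first along the subsequence $n=2^k$, where the quantizations are nested, and then lift it to all $n$ using the standing hypothesis that $\limn \ent{\hat{\Xvec}_n|\Yvec}$ exists. Throughout I treat all the $\hat{\Xvec}_{2^k}$ and their coarsenings as genuinely discrete RVs, so every chain-rule and ``conditioning reduces entropy'' manipulation is valid no matter what kind of RV $\Yvec$ is.

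First I would settle the easy direction. Since $\partit{2^{k+1}}\prec\partit{2^k}$, the RV $\hat{\Xvec}_{2^k}$ is a deterministic function of $\hat{\Xvec}_{2^{k+1}}$, so the chain rule gives $\ent{\hat{\Xvec}_{2^{k+1}}|\Yvec}=\ent{\hat{\Xvec}_{2^k}|\Yvec}+\ent{\hat{\Xvec}_{2^{k+1}}|\hat{\Xvec}_{2^k},\Yvec}\geq\ent{\hat{\Xvec}_{2^k}|\Yvec}$; hence $k\mapsto\ent{\hat{\Xvec}_{2^k}|\Yvec}$ is non-decreasing and converges in $[0,\infty]$ to some limit $S$. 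Because $\hat{\Xvec}_{2^k}$ is the quantization of $\Xvec$ by the partition $\partit{2^k}$, the definition \eqref{eq:loss} of $\loss{\Xvec\to\Yvec}=\ent{\Xvec|\Yvec}$ as a supremum over all partitions gives $\ent{\hat{\Xvec}_{2^k}|\Yvec}\leq\ent{\Xvec|\Yvec}$ for every $k$, so $S\leq\ent{\Xvec|\Yvec}$.

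The substance is the reverse bound $S\geq\ent{\Xvec|\Yvec}$, i.e.\ $S\geq\ent{V|\Yvec}$ for every discrete RV $V$ obtained from $\Xvec$ by a finite partition $\{B_1,\dots,B_m\}$ of $\dom{X}$. The idea is an approximation-plus-Fano argument. Since the dyadic cubes generate the Borel $\sigma$-algebra, each $B_i$ can be approximated in $P_\Xvec$-measure by a finite union of level-$k$ cubes; disjointifying these approximants yields, for each $k$, a discrete RV $V_k$ that is a function of $\hat{\Xvec}_{2^k}$ with $\epsilon_k:=\Prob{V\neq V_k}\to 0$. On the one hand $\ent{V_k|\Yvec}\leq\ent{\hat{\Xvec}_{2^k}|\Yvec}\leq S$, since $V_k$ is a function of $\hat{\Xvec}_{2^k}$. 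On the other hand, $\ent{V|\Yvec}\leq\ent{V,V_k|\Yvec}=\ent{V_k|\Yvec}+\ent{V|V_k,\Yvec}\leq\ent{V_k|\Yvec}+\ent{V|V_k}$, and Fano's inequality bounds $\ent{V|V_k}$ by $\binent{\epsilon_k}+\epsilon_k\log(m-1)$, which vanishes as $k\to\infty$. Letting $k\to\infty$ gives $\ent{V|\Yvec}\leq S$, and taking the supremum over finite partitions gives $\ent{\Xvec|\Yvec}\leq S$; hence $S=\ent{\Xvec|\Yvec}$. Finally, since by hypothesis $\limn\ent{\hat{\Xvec}_n|\Yvec}$ exists, it equals every subsequential limit, in particular $S=\ent{\Xvec|\Yvec}$, which is the claim.

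The hard part is precisely this reverse bound. The uniform partitions $\partit{n}$ do not refine an arbitrary finite partition, so monotonicity cannot be used directly, and the map $V\mapsto\ent{V|\Yvec}$ is not continuous under weak convergence of $V$ (all the more so when $\Yvec$ has a continuous component), so a soft limiting argument is unavailable. The Fano estimate $\ent{V|V_k}\leq\binent{\epsilon_k}+\epsilon_k\log(m-1)$ is the ``hard'' inequality that makes the approximation go through; the care needed is (i) producing the $V_k$ as genuine functions of $\hat{\Xvec}_{2^k}$ with $\Prob{V\neq V_k}\to0$, which rests only on the dyadic cubes generating the Borel sets together with finiteness of $P_\Xvec$, and (ii) keeping every entropy manipulation in the discrete/finite-alphabet setting where it holds unconditionally.
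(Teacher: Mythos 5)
Your proof is correct, but it takes a genuinely different route from the paper's. The paper reduces the claim to a cited result: writing $\ent{\hat{\Xvec}_n|\Yvec}=\mutinf{\Xvec;\hat{\Xvec}_n|\Yvec}$, identifying $\hat{\Xvec}_{2^k}$ with the first $k$ digits of the binary expansion of $\Xvec$, and invoking Pinsker's Theorem~3.10.1 on the convergence $\mutinf{(\xi_1,\dots,\xi_k);\eta|\epsilon}\to\mutinf{\xi;\eta|\epsilon}$; the passage from the subsequence $n=2^k$ to all $n$ is then the same existence-of-the-limit argument you use. You instead prove the two inequalities from scratch: monotonicity of $\ent{\hat{\Xvec}_{2^k}|\Yvec}$ along the nested dyadic partitions together with the supremum characterization of $\ent{\Xvec|\Yvec}$ gives $S\leq\ent{\Xvec|\Yvec}$, and the dyadic-approximation-plus-Fano estimate $\ent{V|\Yvec}\leq\ent{V_k|\Yvec}+\binent{\epsilon_k}+\epsilon_k\log(m-1)$ gives the converse. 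This is essentially an unpacking of the proof of the Pinsker-type continuity theorem the paper cites, so it buys self-containedness (and makes visible exactly which facts are used: dyadic cubes generate the Borel sets, Fano on a common finite alphabet, and that all entropy manipulations stay in the discrete-given-general setting) at the cost of some bookkeeping you should make explicit if writing it out in full: the approximating sets for a fixed finite partition live at some level $k_j$ that you must align with the index of $V_k$ (harmless, since $S$ is a monotone limit), and you should note that the supremum over \emph{finite} partitions already equals $\ent{\Xvec|\Yvec}$ as defined in~\eqref{eq:loss}.
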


\begin{IEEEproof}
 For the proof we note that
\begin{equation}
 \ent{\hat{\Xvec}_n|\Yvec} = \mutinf{\Xvec;\hat{\Xvec}_n|\Yvec}
\end{equation}
because $\hat{\Xvec}_n$ is a function of $\Xvec$~\cite[Ch.~3.9]{Pinsker_InfoEngl}. Further, if $\xi=(\xi_1,\xi_2,\dots)$ we obtain with~\cite[Thm.~3.10.1]{Pinsker_InfoEngl}
\begin{equation}
 \limn \mutinf{(\xi_1,\xi_2,\dots,\xi_n);\eta|\epsilon} = \mutinf{\xi;\eta|\epsilon}.\label{eq:pinsker}
\end{equation}
We now identify $\epsilon=\Yvec$ and $\eta=\Xvec$. Furthermore, if the limit in Lemma~\ref{lem:quant} exists, all subsequences converge to the same limit. In particular, also the subsequence $\hat{\Xvec}_{2^k}$ converges to the same limit. We now identify this RV with the binary expansion of $\Xvec$ up to order $k$; thus, $\hat{\Xvec}_{2^k}=(\xi_1,\xi_2,\dots,\xi_k)$. Clearly, $\lim_{k\to\infty}\hat{\Xvec}_{2^k} = \Xvec$. Comparing this to~\eqref{eq:pinsker} completes the proof.
\end{IEEEproof}}{}

\bibliographystyle{IEEEtran}
\bibliography{IEEEabrv,/afs/spsc.tugraz.at/project/IT4SP/1_d/Papers/InformationProcessing.bib,%
/afs/spsc.tugraz.at/project/IT4SP/1_d/Papers/ProbabilityPapers.bib,%
/afs/spsc.tugraz.at/user/bgeiger/includes/textbooks.bib,%
/afs/spsc.tugraz.at/user/bgeiger/includes/myOwn.bib,%
/afs/spsc.tugraz.at/user/bgeiger/includes/UWB.bib,%
/afs/spsc.tugraz.at/project/IT4SP/1_d/Papers/InformationWaves.bib,%
/afs/spsc.tugraz.at/project/IT4SP/1_d/Papers/ITBasics.bib,%
/afs/spsc.tugraz.at/project/IT4SP/1_d/Papers/HMMRate.bib,%
/afs/spsc.tugraz.at/project/IT4SP/1_d/Papers/ITAlgos.bib}

\end{document}